\theoremstyle{definition}
 \newtheorem{example}{\protect\examplename}
\theoremstyle{definition}
\newtheorem{defn}{\protect\definitionname}
\theoremstyle{plain}
\newtheorem{thm}{\protect\theoremname}
\theoremstyle{plain}
\newtheorem{lem}{\protect\lemmaname}
\title{Prior-Free Blackwell}
\author{Maxwell Rosenthal\thanks{Georgia Institute of Technology.  Email: \href{mailto:rosenthal@gatech.edu}{rosenthal@gatech.edu.}}}
\date{\today}
\def\D{\;\mathrm{d}}
\providecommand{\definitionname}{Definition}
\providecommand{\examplename}{Example}
\providecommand{\lemmaname}{Lemma}
\providecommand{\theoremname}{Theorem}
\begin{document}
\maketitle
\begin{abstract}
This paper develops a prior-free model of data-driven decision making
in which the decision maker observes the entire distribution of signals
generated by a known experiment under an unknown distribution of the
state variable and evaluates actions according to their worst-case
payoff over the set of state distributions consistent with that observation.
We show how our model applies to partial identification in econometrics
and propose a ranking of experiments in which $E$ is \emph{robustly
more informative} than $E'$ if the value of the decision maker's
problem after observing $E$ is always at least as high as the value
of the decision maker's problem after observing $E'$. This comparison,
which is strictly weaker than Blackwell's classical order, holds if
and only if the null space of $E$ is contained in the null space
of $E'$.
\end{abstract}

\section{Introduction}

A decision maker decides whether or not to assign binary treatment
$T$ to the entire population. The latent state of the world is $(Y_{0},Y_{1},T)$,
the data records the joint distribution $(Y_{T},T)$ of treatment
$T$ and realized outcome $Y_{T}$, and the decision maker evaluates
each action $a$ according to the worst-case expected outcome $Y_{a}$
over the set of latent distributions consistent with the data. What
are the consequences of uncertainty about the unobserved counterfactual
outcomes?

This paper develops a prior-free and empirically-motivated model of
decision making in which the decision maker does not know the prior
distribution $\mu$ of the state variable, observes the entire distribution
$E\mu$ of signals generated by known experiment $E$, and ranks actions
according to their worst-case expected utility over the set of state
distributions $\nu$ satisfying $E\nu=E\mu$. We propose a partial
ranking of experiments in which $E$ is \emph{robustly more informative
}than $E'$ if for every decision problem the decision maker's maxmin
payoff after observing the distribution of signals generated by the
former is at least her maxmin payoff after observing the distribution
of signals generated by the latter. 

In our main result, we show that $E$ is robustly more informative
than $E'$ if and only if the null space of $E$ is contained in the
null space of $E'$. Economically, the set of beliefs consistent with
observation of $E$ is a subset of the set of beliefs consistent with
observation of $E'$. Algebraically, there exists a matrix $\Gamma$
such that $E'=\Gamma E$. Because $E$ is Blackwell more informative
than $E'$ if and only if there exists a garbling matrix $G$ such
that $E'=GE$, it follows immediately that $E$ is robustly more informative
than $E'$ whenever the former is Blackwell more informative than
the latter. We show by example that the converse does not hold; in
doing so, we establish that the robust order admits strictly more
comparable pairs of experiments than the classical order.

The paper is organized as follows. First, we place the paper in the
literature in Section \ref{sec: lit}. Next, we lay out the model
and formalize the econometrics application in Section \ref{sec: model}.
We establish our main characterization in Section \ref{sec: main result}
and the connection to Blackwell in Section \ref{Section: robust weak}.
Finally, we conclude in Section \ref{sec: concl}. All proofs are
in the body.

\section{\label{sec: lit}Contribution and literature}

This paper has two goals. First, we provide an empirically-motivated
and prior-free model of experimentation. We view this framework as
a starting point for analyzing robust decision making in environments
with broad uncertainty about the distribution of payoff-relevant state
variables and an opportunity to learn about that distribution from
data. Our model encompasses partially-identified econometric analysis
as a special case, and our decision maker's worst-case criterion is
consistent with practice in that literature (\citet{Manski2003,Tamer2010,Molinari2020}).
We are distinguished by our interpretation of the identified set as
the output of an abstract signaling device, and by our explicit action--state
formulation of the decision problem.

Second, we develop a novel criterion for ranking experiments on the
basis of their value to a maxmin expected utility decision maker whose
belief set is completely characterized by the output of the experiment.
Our order is implied by but does not imply its classical counterpart
(\citet{blackwell1951comparison,blackwell1953equivalent}), and thus
provides a complementary tool for ranking experiments that are not
Blackwell-comparable. 

We contribute to a broader literature studying alternative models
of Blackwell experimentation. One strand of papers (\citet{celen2012informativeness,heyen2015informativeness,LI201618})
extends the original order to ambiguity-averse decision makers who
update a set of priors using a single draw from the experiment. Other
recent work includes \citet{wang2024informativeness}, who studies
Bayesian updating where the experiment itself is unknown, and \citet{Whitmeyer2025},
who provides a general treatment of non-Bayesian updating in experimental
contexts. Our paper differs in two important respects. First, our
decision maker observes the entire distribution of experimental output
rather than a single draw. Second, our decision maker learns by restricting
her belief set rather than updating her priors. To the best of our
knowledge, both features are novel.

\section{\label{sec: model}Model}

This paper takes a linear-algebraic approach to partial identification.
For any matrix $A$, we write $\text{null}(A)\equiv\{x\vert Ax=0\}$
for its null space. Given a finite set $X$, we write $\Delta(X)$
for the set of probability distributions on $X$ and identify $\Delta(X)$
with the corresponding subset of $\mathbb{R}^{\vert X\vert}$ equipped
with the Euclidean topology. More generally, given a compact metric
space $X$, we write $\Delta(X)$ for the set of Borel probabilities
on $X$ equipped with the topology of weak convergence.

\paragraph*{States and experiments}

The set of states $\Omega\equiv\{\omega_{1},...,\omega_{n}\}$ is
finite and the state of the world is distributed according to unknown
distribution $\mu$ in $\Delta(\Omega)$. We allow the decision maker
structural knowledge about the data generating process, and require
only that the set of state distributions $\mathcal{P}\subset\Delta(\Omega)$
that she entertains is compact and contains $\mu$. In order to allow
$\mathcal{P}$ to encode conditional independence assumptions of the
form $(Y_{0},Y_{1})\perp T\mid X$, we do not require that it is convex.

An \emph{experiment }is a finite set of messages $\{\sigma_{1},...,\sigma_{m}\}$
and an $m\times n$ matrix $E$ with entries 
\[
E_{ij}\equiv\text{Pr}(\sigma=\sigma_{i}\mid\omega=\omega_{j}).
\]
The number of messages $m$ is variable, the entries of $E$ are non-negative,
and the columns of $E$ sum to $1$. We suppress $\{\sigma_{1},...,\sigma_{m}\}$
wherever possible, identify experiments with their transition matrices
$E$, and make use of the fact that
\[
\forall d\in\text{null}(E)\;\;0=\sum_{i}\sum_{j}E_{ij}d_{j}=\sum_{j}d_{j}\sum_{i}E_{ij}=\sum_{j}d_{j}.
\]

\paragraph*{Actions and payoffs}

The set of actions $A$ is a non-empty and compact metric space and
the decision maker's utility $u:A\times\Omega\to\mathbb{R}$ is continuous.
We permit but do not require randomization, and remind the reader
that if $X$ is a compact metric space and $v:X\times\Omega\to\mathbb{R}$
is continuous then the set of mixed actions $A\equiv\Delta(X)$ on
$X$ is a compact metric space and the utility function $u:A\times\Omega\to\mathbb{R}$
defined by $u(a,\omega)\equiv\int_{X}v(x,\omega)\,\D a(x)$ is continuous.

\paragraph*{Beliefs}

The decision maker observes the distribution of messages $E\mu$ generated
by known experiment $E$ under the unknown state distribution $\mu$
and views state distribution $\nu$ in $\mathcal{P}$ as plausible
if and only if $E\nu=E\mu$. We write
\[
\mathcal{P}(E,\mu)\equiv\{\nu\in\mathcal{P}\mid E\nu=E\mu\}
\]
for the \emph{identified set} and make repeated use of the identity
\[
\{\nu\in\mathcal{P}\mid E\nu=E\mu\}=\{\nu\in\mathcal{P}\mid(\nu-\mu)\in\text{null}(E)\}.
\]
There are two extreme cases. First, if the columns of $E$ are linearly
independent, then the decision maker's model is exactly identified
and $\mathcal{P}(E,\mu)=\{\mu\}$. Second, if the columns of $E$
are identical, then the experiment conveys no information about the
true distribution of the state of the world and $\mathcal{P}(E,\mu)=\mathcal{P}$.

\paragraph*{The decision problem}

The decision maker evaluates actions according to their worst-case
expected payoff over the identified set $\mathcal{P}(E,\mu)$. Formally,
given experiment $E$, her problem is
\[
\max_{a\in A}\;\min_{\nu\in\mathcal{P}(E,\mu)}\;\sum_{j}u(a,\omega_{j})\nu(\omega_{j}).
\]
We call the tuple $(A,u,\mathcal{P},\mu)$ the \emph{decision problem}.
As a technical matter, because the map $(a,\nu)\mapsto\sum_{j}u(a,\omega_{j})\nu(\omega_{j})$
is continuous, the action set $A$ is compact, and the identified
set $\mathcal{P}(E,\mu)$ is a closed subset of a compact set $\mathcal{P}$
and therefore itself compact,
\begin{enumerate}
\item[(i)] the inner minimization problem has a solution $\nu^{*}$;
\item[(ii)] the worst-case expected utility $a\mapsto\min_{\nu\in\mathcal{P}(E,\mu)}\;\sum_{j}u(a,\omega_{j})\nu(\omega_{j})$
is continuous; and
\item[(iii)] the decision problem has a solution $a^{*}$.
\end{enumerate}

\paragraph*{Application}

Our framework encompasses robust causal inference in a variety of
partially identified econometric environments, with or without structural
restrictions. We illustrate here with a series of non-comprehensive
examples.

In our first two examples, the set $\mathcal{Y}\subset\mathbb{R}$
of outcomes $Y$, the set $\mathcal{X}$ of covariates $X$, and the
set $\mathcal{T}$ of treatments $T$ are each non-empty and finite.
The latent state $\omega\equiv((Y_{t})_{t},X,T)$ encodes the counterfactual
outcome $Y_{t}$ under treatment $t\in\mathcal{T}$, the covariate
$X$, and the factual treatment $T$. The set of latent states is
$\Omega\equiv\mathcal{Y}^{\mathcal{T}}\times\mathcal{X}\times\mathcal{T}$,
the set of actions $A$ coincides with the set of treatments $\mathcal{T}$,
and the decision maker's utility $u:A\times\Omega\to\mathbb{R}$ is
\[
u(a,((Y_{t})_{t},X,T))\equiv Y_{a}.
\]
The decision maker is not biased towards a treatment status quo, and
each action is evaluated with respect to the symmetric worst-case
expected outcome criterion.
\begin{example}
\label{example 1}The set of state distributions is $\mathcal{P}\equiv\Delta(\Omega)$,
the set of signals $\mathcal{Y}\times\mathcal{X}\times\mathcal{T}$
is the set of observable states $(Y_{T},X,T)$, and the experiment
$E$ sends signal $(Y,X,T)$ with probability $1$ in all latent states
$((Y_{t})_{t},X',T')$ with $(X',T')=(X,T)$ and $Y_{T}=Y$.
\end{example}
The decision maker in Example \ref{example 1} is unwilling to exogenously
impose the conditional independence assumption $(Y_{t})_{t}\perp T\mid X$
that identifies the causal mean outcomes $E[Y_{a}]$ from the marginal
distribution of observable variables $(Y_{T},X,T)$.
\begin{example}
\label{example 2}The set of state distributions $\mathcal{P}\equiv\{\nu\in\Delta(\Omega)\mid(Y_{t})_{t}\perp_{\nu}T\mid X\}$
reflects exogenous knowledge that treatment is unconfounded, the set
of signals $\mathcal{Y}\times\mathcal{T}$ is the set of outcome--treatment
pairs $(Y_{T},T)$, and experiment $E$ sends signal $(Y,T)$ with
probability $1$ in all latent states $((Y_{t})_{t},X,T')$ with $(Y_{T},T')=(Y,T)$. 
\end{example}
While the decision maker in Example \ref{example 2} does know that
treatment is conditionally independent of the vector of counterfactual
outcomes, she observes only the marginal distribution of observable
variables $(Y_{T},T)$ rather than the full joint distribution $(Y_{T},X,T)$.
Her model suffers from potential omitted variable bias and she hedges
against that uncertainty by adopting a robust decision making criterion. 

In our third example, the set $\mathcal{Y}\subset\mathbb{R}$ of outcomes
$Y$ is non-empty and finite, the set of treatments is $\mathcal{T}\equiv\{0,1\}$,
and the set of instruments is $\mathcal{Z}\equiv\{0,1\}$. The latent
state $\omega\equiv(Y_{0},Y_{1},T(0),T(1),Z)$ encodes the counterfactual
outcomes $(Y_{0},Y_{1})$ under the two treatments, the treatment
$T(0)$ taken under non-encouragement $Z=0$, the treatment $T(1)$
taken under encouragement $Z=1$, and the value of the instrument
$Z$. The set of latent states is $\Omega\equiv\mathcal{Y}^{\mathcal{T}}\times\mathcal{T}^{\mathcal{Z}}\times\mathcal{Z}$,
the set of actions $A$ coincides with the set of treatments $\mathcal{T}$,
and the decision maker's utility $u:A\times\Omega\to\mathbb{R}$ is
$u(a,(Y_{0},Y_{1},T(0),T(1),Z))\equiv Y_{a}$. 
\begin{example}
\label{example 3}The set of state distributions $\mathcal{P}\equiv\{\nu\in\Delta(\Omega)\mid Z\perp_{\nu}(Y_{0},Y_{1},T(0),T(1))\}$
reflects exogenous knowledge that the instrument is randomly assigned.
The set of signals $\mathcal{Y}\times\mathcal{T}\times\mathcal{Z}$
is the set of outcome--treatment--instrument triples $(Y_{T},T,Z)$,
and the experiment $E$ sends signal $(Y,T,Z)$ with probability $1$
in all latent states $(Y_{0},Y_{1},T(0),T(1),Z')$ with $(Y_{T(Z')},T(Z'),Z')=(Y,T,Z)$.
\end{example}
In Example \ref{example 3}, the decision maker knows that the instrument
is randomly assigned and she observes the full distribution of the
observables. Nevertheless, as in \citet*{ImbensAngrist1994,AngristImbensRubin1996},
the unconditional mean outcome $Y_{a}$ is unidentified even under
the additional monotonicity assumption $T(1)\geq T(0)$.

\section{\label{sec: main result}Robust informativeness}

The analytical goal of this paper is to rank experiments by their
value to the decision maker.
\begin{defn}
Experiment\emph{ $E$ is robustly more informative} than experiment
$E'$ if for all decision problems $(A,u,\mathcal{P},\mu)$ 
\[
\max_{a\in A}\;\min_{\nu\in\mathcal{P}(E,\mu)}\;\sum_{j}u(a,\omega_{j})\nu(\omega_{j})\geq\max_{a\in A}\;\min_{\nu\in\mathcal{P}(E',\mu)}\;\sum_{j}u(a,\omega_{j})\nu(\omega_{j}).
\]
One experiment is robustly more informative than another if the value
of the decision maker's problem after observing the output of the
former is always at least the value of that problem after observing
the latter. In our main result, we establish that this order is equivalent
to three other criteria.
\end{defn}
\begin{thm}
\label{Theorem}The following four statements are equivalent:
\begin{enumerate}
\item[(i)] there exists a matrix $\Gamma$ such that $E'=\Gamma E$; 
\item[(ii)] $\text{null}(E)\subset\text{null}(E')$; 
\item[(iii)] $\mathcal{P}(E,\mu)\subset\mathcal{P}(E',\mu)$ for all $\mathcal{P}$
and for all $\mu\in\mathcal{P}$;
\item[(iv)] experiment $E$ is robustly more informative than experiment $E'$.
\end{enumerate}
\end{thm}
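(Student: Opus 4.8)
The plan is to prove the equivalence (i) $\Leftrightarrow$ (ii) by elementary linear algebra, the implications (ii) $\Rightarrow$ (iii) $\Rightarrow$ (iv) directly, and the closing implication (iv) $\Rightarrow$ (ii) by contraposition; together these give all four equivalences. Two standard facts are used throughout: the orthogonal-complement duality $\text{null}(M)=\text{row}(M)^{\perp}$, and --- since experiments are column stochastic --- the observation that $\text{null}(E)$ and $\text{null}(E')$ both lie in the hyperplane $\{x\mid\sum_i x_i=0\}$, because $\mathbf{1}^{T}E=\mathbf{1}^{T}$ forces $\mathbf{1}^{T}x=\mathbf{1}^{T}Ex=0$ whenever $Ex=0$.

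For (i) $\Leftrightarrow$ (ii): if $E'=\Gamma E$ then $Ex=0$ implies $E'x=\Gamma Ex=0$, giving $\text{null}(E)\subseteq\text{null}(E')$; conversely, $\text{null}(E)\subseteq\text{null}(E')$ is equivalent on taking orthogonal complements to $\text{row}(E')\subseteq\text{row}(E)$, so each row of $E'$ is a linear combination of the rows of $E$, and assembling those coefficients row by row into a matrix $\Gamma$ yields $E'=\Gamma E$. For (ii) $\Rightarrow$ (iii): if $E\nu=E\mu$ then $\nu-\mu\in\text{null}(E)\subseteq\text{null}(E')$, hence $E'\nu=E'\mu$. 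For (iii) $\Rightarrow$ (iv): since $\{\nu\in\Delta^{n}\mid E\nu=E\mu\}$ is automatically contained in $\Delta^{n}$, (iii) gives $\{\nu\in\Delta^{n}\mid E\nu=E\mu\}\subseteq\{\nu\in\Delta^{n}\mid E'\nu=E'\mu\}$; minimizing the linear functional $\langle a,\cdot\rangle$ over the smaller set produces a weakly larger value for every $a$, and taking the maximum over $a\in A$ preserves the inequality --- which is precisely the definition of $E$ being robustly more informative than $E'$.

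The main obstacle is (iv) $\Rightarrow$ (ii), which I would establish by contraposition. Assume $\text{null}(E)\not\subseteq\text{null}(E')$; I will construct a decision problem on which $E$'s maxmin value is strictly smaller than $E'$'s, contradicting (iv). By the complement duality, $\text{null}(E)\not\subseteq\text{null}(E')$ is the same as $\text{row}(E')\not\subseteq\text{row}(E)$, so I may fix an action $a\in\text{row}(E')\setminus\text{row}(E)$ and take $A=\{a\}$ together with any full-support prior $\mu\in\Delta^{n}$. Since $a\in\text{row}(E')=\text{null}(E')^{\perp}$, the functional $\langle a,\cdot\rangle$ is constant and equal to $\langle a,\mu\rangle$ on the entire set $\{\nu\in\Delta^{n}\mid E'\nu=E'\mu\}$, so the $E'$-value of this problem is $\langle a,\mu\rangle$. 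On the other hand, $a\notin\text{row}(E)=\text{null}(E)^{\perp}$ means there is $d\in\text{null}(E)$ with $\langle a,d\rangle\neq0$; after replacing $d$ by $-d$ if needed, $\langle a,d\rangle<0$. Because $\sum_i d_i=0$ and $\mu$ has full support, $\mu+\lambda d\in\Delta^{n}$ for all small $\lambda>0$, and $E(\mu+\lambda d)=E\mu$, so the $E$-value is at most $\langle a,\mu+\lambda d\rangle=\langle a,\mu\rangle+\lambda\langle a,d\rangle<\langle a,\mu\rangle$. This contradiction completes the argument.

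The only points I expect to require care are the claim that $\langle a,\cdot\rangle$ is genuinely constant on the $E'$-feasible set (which relies on $a\in\text{row}(E')$ and $\text{null}(E')=\text{row}(E')^{\perp}$, so that $\langle a,\nu-\mu\rangle=0$ for every feasible $\nu$) and the verification that the perturbed prior $\mu+\lambda d$ remains in the simplex (which uses column stochasticity of $E$ and full support of $\mu$); the remaining steps are routine. A slightly more economical alternative would skip (i) as an intermediate node: prove (ii) $\Leftrightarrow$ (iii) directly --- the converse direction being the same perturbation argument with a full-support $\mu$ --- and then only (iii) $\Leftrightarrow$ (iv), keeping (i) $\Leftrightarrow$ (ii) as the standalone linear-algebra lemma.
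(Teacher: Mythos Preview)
Your proof is correct and the overall architecture --- (i) $\Leftrightarrow$ (ii) as a standalone linear-algebra fact, (ii) $\Rightarrow$ (iii) $\Rightarrow$ (iv) directly, and the closing contrapositive --- matches the paper's decomposition into Lemmas 1--3. The one genuinely different choice is how you produce the witnessing action in the contrapositive step. The paper works from the negation of (iii): it picks a prior $p$ in $\{\nu\mid E\nu=E\mu\}\setminus\{\nu\mid E'\nu=E'\mu\}$ and applies the separating hyperplane theorem to the compact convex sets $\{p\}$ and $\{\nu\in\Delta^n\mid E'\nu=E'\mu\}$ to obtain $a$. You instead work from the negation of (ii) and pull $a$ directly out of $\text{row}(E')\setminus\text{row}(E)$ via the duality $\text{null}(M)=\text{row}(M)^{\perp}$. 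Your route is more constructive and avoids the hyperplane theorem altogether; it also yields the sharper conclusion that $\langle a,\cdot\rangle$ is \emph{constant} on the $E'$-feasible set, not merely bounded away from $\langle a,p\rangle$. The paper's route, on the other hand, does not need the explicit column-stochasticity observation $\text{null}(E)\subset\{x\mid\sum_i x_i=0\}$ at that stage, having already used the perturbation argument in its Lemma 2. Both approaches ultimately rely on the same full-support perturbation $\mu+\lambda d$, just at different points in the chain.
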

As we show in Theorem \ref{Theorem}, experiment $E$ is robustly
more informative than $E'$ if and only if there exists a matrix $\Gamma$
such that $E'=\Gamma E$. In contrast, $E$ is Blackwell more informative
than $E'$ if and only if there exists a garbling matrix $G$ such
that the entries of $G$ are non-negative, the columns of $G$ sum
to $1$, and $E'=GE$. This implies that our order is no stronger
than its classical counterpart and suggests that it is weaker, as
we establish in Section \ref{Section: robust weak}. For now, we develop
the theorem as the summary of three lemmas. First, (i) and (ii) are
equivalent.
\begin{lem}
\label{Lemma 1}There exists a matrix $\Gamma$ with $E'=\Gamma E$
if and only if $\text{null}(E)\subset\text{null}(E')$. 
\end{lem}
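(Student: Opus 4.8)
The plan is to prove the two implications separately, with the forward direction being immediate and the reverse direction resting on the standard duality between null spaces and row spaces over $\mathbb{R}$.

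For the ``only if'' direction, suppose $E'=\Gamma E$. Then any $x\in\text{null}(E)$ satisfies $E'x=\Gamma(Ex)=\Gamma 0=0$, so $x\in\text{null}(E')$; hence $\text{null}(E)\subset\text{null}(E')$. This requires no real work.

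For the ``if'' direction, the key observation is that the matrix equation $E'=\Gamma E$ is solvable for $\Gamma$ precisely when every row of $E'$ lies in $\text{row}(E)$: writing the equation columnwise as $(E')^{T}=E^{T}\Gamma^{T}$ shows that each row of $E'$ must be a linear combination of the rows of $E$, and conversely any such collection of coefficients assembles into an admissible $\Gamma$. So it suffices to show $\text{row}(E')\subset\text{row}(E)$. This follows from the orthogonality relations $\text{row}(E)=\text{null}(E)^{\perp}$ and $\text{row}(E')=\text{null}(E')^{\perp}$: the hypothesis $\text{null}(E)\subset\text{null}(E')$ gives, upon taking orthogonal complements and reversing the inclusion, $\text{null}(E')^{\perp}\subset\text{null}(E)^{\perp}$, that is, $\text{row}(E')\subset\text{row}(E)$, as needed.

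An alternative, more constructive route for the reverse direction is to set $\Gamma\equiv E'E^{+}$, where $E^{+}$ is the Moore--Penrose pseudoinverse of $E$. Then $E^{+}E$ is the orthogonal projection of $\mathbb{R}^{n}$ onto $\text{null}(E)^{\perp}$, so $\Gamma E=E'E^{+}E$ agrees with $E'$ on $\text{null}(E)^{\perp}$ and, by the hypothesis $\text{null}(E)\subset\text{null}(E')$, also agrees with $E'$ (both being zero) on $\text{null}(E)$; since these two subspaces span $\mathbb{R}^{n}$, we conclude $\Gamma E=E'$. I expect no serious obstacle here: the only points requiring a line of care are the columnwise reformulation of $E'=\Gamma E$ as a rowspace containment and, in the alternative route, the verification that $E^{+}E$ acts as the claimed projection on the two complementary subspaces.
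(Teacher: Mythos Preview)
Your proposal is correct and your primary argument is essentially identical to the paper's: both directions match, with the reverse direction passing from $\text{null}(E)\subset\text{null}(E')$ to $\text{row}(E')\subset\text{row}(E)$ via orthogonal complements (what the paper calls the fundamental theorem of linear algebra) and then reading off the coefficients of $\Gamma$ row by row. Your additional Moore--Penrose construction $\Gamma=E'E^{+}$ is a nice explicit bonus the paper does not provide, but it is not needed for the result.
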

\begin{proof}
If $E'=\Gamma E$ and $Ex=0$ then 
\[
E'x=\Gamma Ex=\Gamma0=0
\]
and hence $\text{null}(E)\subset\text{null}(E')$. Conversely, if
$\text{null}(E)\subset\text{null}(E')$ then the fundamental theorem
of linear algebra implies the rows of $E$ span the rows of $E'$
and thus there exists a matrix $\Gamma_{m'\times m}$ such that
\[
(E'_{i1},...,E'_{in})=\sum_{j=1}^{m}\Gamma_{ij}(E_{j1},...,E_{jn}).
\]
Consequently, $E'=\Gamma E$.
\end{proof}
Next, (ii) and (iii) are equivalent.
\begin{lem}
\label{Lemma 2}Experiments $E,E'$ satisfy $\text{null}(E)\subset\text{null}(E')$
if and only if $\mathcal{P}(E,\mu)\subset\mathcal{P}(E',\mu)$ for
all $\mathcal{P}$ and for all $\mu\in\mathcal{P}$.
\end{lem}
\begin{proof}
If $\text{null}(E)\subset\text{null}(E')$ then $\mathcal{P}(E,\mu)\subset\mathcal{P}(E',\mu)$
for all $\mathcal{P}$ and for all $\mu$. Conversely, if there exists
$d\in\text{null}(E)\setminus\text{null}(E')$ then let $\mu$ have
full support, choose $\lambda\in\mathbb{R}\setminus\{0\}$ with $\vert\lambda\vert$
sufficiently small to satisfy $\nu\equiv(\mu+\lambda d)\in\Delta(\Omega)$,
and let $\mathcal{P}\subset\Delta(\Omega)$ be compact and contain
both $\mu$ and $\nu$. Because $(\nu-\mu)\in\text{null}(E)\setminus\text{null}(E')$,
we have $\nu\in\mathcal{P}(E,\mu)\setminus\mathcal{P}(E',\mu)$. 
\end{proof}
Finally, (iii) and (iv) are substantively equivalent. 
\begin{lem}
\label{Lemma 3}If $\mathcal{P}(E,\mu)\not\subset\mathcal{P}(E',\mu)$
then there exists an action set $A$ and a utility function $u$ such
that
\[
\max_{a\in A}\;\min_{\nu\in\mathcal{P}(E,\mu)}\;\sum_{j}u(a,\omega_{j})\nu(\omega_{j})<\max_{a\in A}\;\min_{\nu\in\mathcal{P}(E',\mu)}\;\sum_{j}u(a,\omega_{j})\nu(\omega_{j})
\]
.
\end{lem}
\begin{proof}
Let $\nu\in\mathcal{P}(E,\mu)\setminus\mathcal{P}(E',\mu)$ and interpret
sets $S\equiv\{\nu\},T\equiv\text{conv}(\mathcal{P}(E',\mu))$ as
subsets of $\mathbb{R}^{\vert\Omega\vert}$. First, because $E'$
is linear and $E'\nu'=E'\mu$ for all $\nu'\in\mathcal{P}(E',\mu)$,
we have $E'\nu'=E'\mu$ for all $\nu'\in T\supset\mathcal{P}(E',\mu)$.
In turn, because $E'\nu\neq E'\mu$, we conclude $S$ and $T$ are
disjoint. Second, because $S$ and $T$ are also compact and convex,
the separating hyperplane theorem provides a vector $\alpha\in\mathbb{R}^{n}$
and a constant $\beta\in\mathbb{R}$ such that
\[
\forall\nu'\in T\;\;\sum_{j}\alpha_{j}\nu(\omega_{j})<\beta<\sum_{j}\alpha_{j}\nu'(\omega_{j}).
\]
Third, because $\mathcal{P}(E',\mu)\subset T$ we have
\[
\min_{\nu'\in\mathcal{P}(E,\mu)}\sum_{j}\alpha_{j}\nu'(\omega_{j})\leq\sum_{j}\alpha_{j}\nu(\omega_{j})<\min_{\nu'\in T}\sum_{j}\alpha_{j}\nu'(\omega_{j})\leq\min_{\nu'\in\mathcal{P}(E',\mu)}\sum_{j}\alpha_{j}\nu'(\omega_{j}).
\]
Set $A\equiv\{a\}$ and $u(a,\omega_{j})\equiv\alpha_{j}$.
\end{proof}
Theorem \ref{Theorem} now follows quickly from Lemmas \ref{Lemma 1}--\ref{Lemma 3}.
\begin{proof}[Proof of Theorem \ref{Theorem}]
Lemma \ref{Lemma 1} and Lemma \ref{Lemma 2} imply (i), (ii), and
(iii) are equivalent, and it is apparent that (iii) implies (iv).
By Lemma \ref{Lemma 3}, the negation of (iii) yields a decision problem
$(A,u,\mathcal{P},\mu)$ such that
\[
\max_{a\in A}\;\min_{\nu\in\mathcal{P}(E,\mu)}\;\sum_{j}u(a,\omega_{j})\nu(\omega_{j})<\max_{a\in A}\;\min_{\nu\in\mathcal{P}(E',\mu)}\;\sum_{j}u(a,\omega_{j})\nu(\omega_{j}).
\]
Accordingly, the negation of (iii) implies the negation of (iv).
\end{proof}

\section{\label{Section: robust weak}Blackwell implies robust informativeness}

In this section, we formalize our earlier statements about the relationship
between robust informativeness and Blackwell informativeness.
\begin{defn}
\label{Definition: blackwell}Experiment $E$ is \emph{Blackwell more
informative} than $E'$ if there exists a matrix $G$ such that the
entries of $G$ are non-negative, the columns of $G$ sum to $1$,
and $E'=GE$.
\end{defn}
The additional restrictions on $G$ strengthen Blackwell's order relative
to our own.
\begin{thm}
\label{thm 2}If $E$ is Blackwell more informative than $E'$ then
$E$ is robustly more informative than $E'$.
\end{thm}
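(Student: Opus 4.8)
The plan is to reduce the statement to Theorem~\ref{Theorem}, using the classical garbling characterization of Blackwell's order, so that the theorem becomes essentially a one-line corollary.

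First I would recall what ``Blackwell more informative'' means in the present matrix notation (experiments are nonnegative, column-stochastic matrices). By Blackwell's theorem \citep{blackwell1951comparison,blackwell1953equivalent}, $E$ is Blackwell more informative than $E'$ if and only if $E'$ is a \emph{garbling} of $E$: there is a nonnegative, column-stochastic matrix $\Gamma$ with $E'=\Gamma E$. The stochasticity of $\Gamma$ is exactly what guarantees that $E'$ is a well-defined experiment whenever $E$ is, but for the present purpose it is irrelevant; the only fact I need is the existence of \emph{some} matrix $\Gamma$ with $E'=\Gamma E$, which is precisely statement (i) of Theorem~\ref{Theorem}.

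Given that, the argument is immediate: statement (i) of Theorem~\ref{Theorem} holds, so by the equivalence established there statement (iv) holds as well, i.e.\ $E$ is robustly more informative than $E'$. (If one preferred to avoid invoking the full equivalence, the same conclusion follows by chaining Lemmas~\ref{Lemma 1}--\ref{Lemma 3}: $E'=\Gamma E$ forces $\mathrm{null}(E)\subset\mathrm{null}(E')$ directly, since $Ex=0$ implies $E'x=\Gamma E x=0$.)

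The reason there is no real obstacle here is structural: robust informativeness only demands that $E'$ be \emph{a} linear image of $E$, whereas the Blackwell order demands the far stronger property that $E'$ be a \emph{stochastic} linear image. The only care required is bookkeeping --- matching the side on which the garbling acts to the column-stochastic convention for experiments fixed in the setup, and keeping the dimensions ($E'$ being $l\times n$, $\Gamma$ being $l\times m$) consistent --- and none of it is deep. The genuinely substantive companion fact, that this containment of orders is strict so that robust informativeness is the strictly weaker notion, is what the counterexample following this theorem is there to supply.
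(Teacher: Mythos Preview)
Your proposal is correct and follows exactly the paper's own route: invoke Blackwell's garbling characterization to obtain a matrix $\Gamma$ with $E'=\Gamma E$, observe that this is precisely criterion~(i) of Theorem~\ref{Theorem}, and conclude~(iv). Your bookkeeping (column-stochastic $\Gamma$, $E'=\Gamma E$) is in fact cleaner than the paper's printed proof, which contains an apparent slip writing ``row-stochastic'' and $E=\Gamma E'$; the intended argument is identical to yours.
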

\begin{proof}
The claim follows immediately from the equivalence between characterization
(i) and (iv) in the statement of Theorem \ref{Theorem} and from Definition
\ref{Definition: blackwell}.
\end{proof}
Although Theorem \ref{Theorem} and Theorem \ref{thm 2} do not on
their own imply that our order is strictly weaker than Blackwell's,
we confirm by example that there are robustly comparable experiments
which are Blackwell incomparable.
\begin{example}
Let experiments $E,E'$ both send messages $\{\sigma_{1},\sigma_{2},\sigma_{3}\}$
with transition matrices
\begin{align*}
E & \equiv\begin{pmatrix}1/2 & 0 & 0\\
1/2 & 1/2 & 0\\
0 & 1/2 & 1
\end{pmatrix}, & E' & \equiv\begin{pmatrix}1 & 0 & 0\\
0 & 1 & 1\\
0 & 0 & 0
\end{pmatrix}
\end{align*}
and define

\begin{align*}
\Gamma & \equiv\begin{pmatrix}2 & 0 & 0\\
-1 & 1 & 1\\
0 & 0 & 0
\end{pmatrix}.
\end{align*}
First, because the columns of $E$ are linearly independent and the
columns of $E'$ are linearly dependent, the null space of $E$ is
a proper subset of the null space of $E$'. In turn, Theorem \ref{Theorem}
implies that $E$ is strictly robustly more informative than $E'$
and Theorem \ref{thm 2} implies that $E'$ is not weakly Blackwell
more informative than $E$. Second, because $E$ is invertible, $\Gamma$
is the only real matrix $G$ satisfying $E'=GE$. Because $\Gamma$
is not a garbling, $E$ is not Blackwell more informative than $E'$.
While any experiment that exactly identifies the state distribution
$\mu$ is a maximum element in the robust informativeness order, it
is not necessarily Blackwell more informative than experiments that
do not point identify $\mu$.
\end{example}

\section{\label{sec: concl}Conclusions}

This paper introduces a fully prior-free model of data-driven decision
making and identifies applications to partially-identified econometric
decision problems. We propose a partial ranking of experiments based
on their value to a decision maker with no prior information about
the distribution of the state variable, and provide a sharp characterization
via null-space inclusion. Our order is implied by but does not imply
Blackwell's, and thus offers a richer set of comparisons than its
classical counterpart. While we have written the paper for the maxmin
expected utility criterion, our results generalize readily to any
ambiguity-averse preference under which payoffs increase monotonically
as the uncertainty set shrinks, as implied by the third characterization
in Theorem \ref{Theorem}. We leave the formalization of this observation
to future work.

\selectlanguage{american}%
\bibliographystyle{chicago}
\bibliography{rb}
\selectlanguage{english}%

\end{document}